\newtheorem{prop}{Proposition}[section]
\newcommand{\bra}[1]{\left\langle{#1}\right\vert}
\newcommand{\ket}[1]{\left\vert{#1}\right\rangle}
\definecolor{darkblue}{rgb}{0.0,0.0,0.3}
\begin{document}
\title{Measurement-based local quantum filters and
their ability to transform quantum entanglement}
\author{Debmalya Das}
\email{debmalya@iisermohali.ac.in}
\affiliation{Department of Physical Sciences, 
Indian Institute of Science Education \& Research
(IISER), Mohali, India}
\author{Ritabrata Sengupta}
\email{rb@isid.ac.in}
\affiliation{Department of Mathematical Sciences, 
Indian Institute of Science Education \& Research
(IISER), Mohali, India}
\affiliation{Theoretical Statistics and Mathematics Unit, Indian  
Statistical Institute, Delhi Centre, New Delhi, India}
\author{Arvind}
\email{arvind@iisermohali.ac.in}
\affiliation{Department of Physical Sciences, 
Indian Institute of Science Education \&
Research (IISER), Mohali, India}
\begin{abstract}
We introduce local filters as a means to detect
the entanglement of bound entangled states which
do not yield to detection by witnesses based on
positive (P) maps which are not completely
positive (CP).  We demonstrate how such
non-detectable bound entangled states can be
locally filtered into detectable bound entangled
states.  Specifically, we show that a bound
entangled state in the orthogonal complement of
the unextendible product bases (UPB), can be
locally filtered into a another bound entangled
state that is detectable by the Choi map.  We
reinterpret these filters as local measurements on
locally extended Hilbert spaces. We give explicit
constructions of a measurement-based implementation
of these filters  for  $2 \otimes 2$ and  $3
\otimes 3$ systems. This  provides us with a
physical mechanism to implement such local
filters.
\end{abstract} 
\pacs{03.65.Ud}
\maketitle
\section{Introduction}
\label{intro}
Ever since its introduction by
Schr\"{o}dinger~\cite{shro1,shro2} in the context
of the EPR paradox~\cite{epr}, quantum
entanglement has played a central role in quantum
theory.  While entanglement is responsible for the
non-classical correlations leading to the
violation of Bell's
inequalities~\cite{Bell1964a,Bell}, it also plays
a key role in quantum computing where it is
connected with the  exponential advantage of
quantum algorithms over their classical
counterparts~\cite{NC}.

Studies of entanglement  have led to a well
developed mathematical theory of entanglement
where  positive maps (P) which are not completely
positive (CP)~\cite{cho1, kye, kye1, kossak1,
kossak4, 1108.0130, PhysRevA.84.024302,
MR3280004}  and unextendable product bases
(UPB)~\cite{B2, T2, MR2907636} play an important
role~\cite{T1, PhysRevA.84.032328,
PhysRevA.87.012318, PhysRevA.87.064302}.  These
mathematical advances have led to the discovery of
bound entangled states~\cite{boundent}: states
from which one cannot distill EPR pairs although
they are still provably non-separable.

Quantum states (pure or mixed) are represented  by
positive definite Hermitian operators $\rho \in
\mathcal{B(H)}$ with unit trace. For the special
case when the rank of $\rho$ is one, it
represents a pure state.  For a bipartite
composite system where states  are defined on
$\mathcal{B}(\mathcal{H}_A \otimes
\mathcal{H}_B)$, a state $\rho$  is said to be
separable if it can be written as a convex sum:
\begin{equation}
\rho=\sum_{j} p_j \rho_j^A \otimes
\rho_j^B, \quad p_j>0, \quad \sum_j p_j=1; 
\end{equation}
where $\rho_j^A$ and $\rho_j^B$ are states in
$\mathcal{B}(\mathcal{H}_A)$ and
$\mathcal{B}(\mathcal{H}_B)$ respectively. A state
is entangled, if it cannot be written in the above
form.  The fundamental problem of determining
whether a given state $\rho$  is separable or
entangled remains open for general states of
systems which are beyond $2\otimes2$ and $2\otimes 3$.

Allowed quantum evolutions are  those P maps which
are CP. P maps which are not CP are not physically
allowed quantum evolutions,  because entangled
states can lose their positivity when  such a map
is applied to one part of the system.  Therefore
such maps  act as entanglement witnesses.  The
partial transpose operation which is a particular
entanglement witness, plays an important role in
the identification of entangled states~\cite{P1}.
States which reveal their entanglement by
acquiring one or more negative eigenvalues under
partial transposition are called NPT (not-positive
under partial transpose) while the rest are called
PPT (positive under partial transpose).  NPT
states are entangled while PPT states can be
either entangled or separable.  In $2\otimes2$ and
$2\otimes 3$  dimensional systems, it has been
shown that a state is separable if and only if it
is PPT~\cite{P1,H1}. Therefore, in dimensions $3
\otimes 3$ and larger, there are entangled states
which are PPT~\cite{guth}.  These states in
general require other entanglement witnesses to
implicate their entanglement and cannot be
distilled to give EPR pairs. Such states (with
non-distillable entanglement) are called PPT or
bound entangled states.

Choi map and its generalizations have been used to
detect the entanglement of certain classes of
bound entangled states~\cite{simon}. However, even
for the $3\otimes3$ system, the detection of bound
entangled states is far from complete. Local
operations, including measurement and filtering,
cannot alter the status of the state from PPT to
NPT and therefore can be used to convert  one NPT
state to another NPT state or one PPT state to
another PPT state.  Gisin et. al. showed that,
starting with a mixed entangled state of a
$2\otimes 2$ system that does not violate Bell
inequalities, one can set up a local filtration
scheme based on measurements such that the
filtered states violate Bell
inequalities~\cite{MR1371104}. In this case only
NPT states were involved as there are no PPT
entangled states for the $2\otimes 2$ case.  These
results have been extended, used and
experimentally validated by a number  of
researchers~\cite{PhysRevLett.111.160402,PhysRevA.64.010101,
PhysRevLett.89.170401,1751,PhysRevLett.111.160402,Kwiat}.
Our work is a generalization and extension of
these results into the domain of PPT states of
$3\otimes 3$ systems.  In this work, we introduce
local filters which convert PPT entangled states
not detectable by the Choi map, to states which
are detectable by the Choi map. In particular we
are able to show that the PPT states obtained from
the UPB can be converted into states detectable by
the Choi map.  Furthermore, we provide an explicit
scheme for implementation of our filtration
protocol via local projective measurements
involving local ancillas.  Although the notation
of entanglement is well defined for infinite
dimensional spaces, we restrict ourselves to
finite dimensional Hilbert spaces in this paper.

The material in this paper is arranged as follows:
In Section~\ref{filtration} we describe our local
filtration scheme. Two examples are taken up in
the Section~\ref{filtration-ent} where such
schemes are used to manipulate entangled states.
Section~\ref{implementation} describes a
measurement-based scheme to realize the local
filters while Section~\ref{implementation-general}
describes the general scheme.
Section~\ref{implementation-two} describes the
implementation of filters used by Gisin and
Section~\ref{implementation-three} describes the
implementation of filters on three-level systems.
Section~\ref{conc} offers some concluding remarks.
\section{Local filtration and entanglement
detection}
\label{filtration}
\subsection{Local filters}
\label{filtration-local}
Local filters are local non-unitary operators
represented by  $L\otimes M$ where $L$ and $M$ are
invertible operators acting in the state spaces of
their respective systems.  Given a bipartite
quantum state $\rho\in\mathcal{B}(\mathcal{H}_A \otimes
\mathcal{H}_B)$, the filter acts
on the state giving a new state 
\begin{equation}
\rho^{f} = (L\otimes M)\rho (L\otimes M)^\dag
\label{filter}
\end{equation} 
which is a positive Hermitian operator belonging
to the same space and its trace can be brought to
one by dividing by an appropriate positive number.
For every invertible set of operators $L$ and $M$
we thus have a filter.
\begin{prop}
Let $L$ and $M$ be two full rank operators. Then the
map $\rho \mapsto (L\otimes M) \rho  (L\otimes
M)^\dag$ does not change the Schmidt number of the state. 
\end{prop}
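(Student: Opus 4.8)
The plan is to reduce the statement about mixed states to a statement about pure states, where the Schmidt number coincides with the Schmidt rank, and then to exploit the invertibility of $L$ and $M$ together with a symmetry argument. First I would recall that the Schmidt number of a mixed state $\rho$ is the smallest integer $k$ such that $\rho$ admits a convex decomposition $\rho=\sum_i p_i\ket{\psi_i}\bra{\psi_i}$ in which every pure component $\ket{\psi_i}$ has Schmidt rank at most $k$. Thus the entire argument hinges on controlling how the Schmidt ranks of pure states behave under the filter.

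The key pure-state fact I would establish is that for any $\ket{\psi}\in\mathcal{H}_A\otimes\mathcal{H}_B$ the vector $(L\otimes M)\ket{\psi}$ has the same Schmidt rank as $\ket{\psi}$. Writing $\ket{\psi}=\sum_{ij}C_{ij}\ket{i}\ket{j}$, the Schmidt rank equals the rank of the coefficient matrix $C$, and the action of the filter sends $C\mapsto L\,C\,M^{T}$. Since $L$ and $M$ are full rank, hence invertible on their (equal-dimensional) spaces, left and right multiplication by invertible matrices preserves matrix rank, so $\mathrm{rank}(LCM^{T})=\mathrm{rank}(C)$. Overall rescaling coming from the loss of normalization does not affect the rank either, so the Schmidt rank is genuinely invariant.

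With this in hand I would transport decompositions across the filter. Given an optimal decomposition $\rho=\sum_i p_i\ket{\psi_i}\bra{\psi_i}$ realizing Schmidt number $k$, I set $\ket{\tilde\psi_i}=(L\otimes M)\ket{\psi_i}$ and absorb their norms, so that after dividing by $\tr\rho^{f}$ the filtered state $\rho^{f}$ is written as a convex combination of the normalized $\ket{\tilde\psi_i}$. By the pure-state fact each of these has Schmidt rank at most $k$, so the Schmidt number of $\rho^{f}$ is at most $k$. The final step is the reverse inequality, which I expect to get for free from symmetry: because $L$ and $M$ are full rank their inverses exist, $\rho=(L^{-1}\otimes M^{-1})\rho^{f}(L^{-1}\otimes M^{-1})^\dag$, and $L^{-1},M^{-1}$ are again full rank, so the same reasoning gives that the Schmidt number of $\rho$ is at most that of $\rho^{f}$. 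Combining the two inequalities yields equality.

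The only real subtlety — and the place I would be most careful — is the bookkeeping of normalization: $(L\otimes M)\ket{\psi_i}$ is not a unit vector and $\rho^{f}$ is not trace one, so I must verify that rescaling the pure-state vectors and re-weighting the convex combination neither changes any individual Schmidt rank nor spoils the minimization defining the Schmidt number. Once it is observed that Schmidt rank is scale invariant and that the Schmidt number depends only on the set of pure states appearing with nonzero weight, this bookkeeping is routine and the proof closes.
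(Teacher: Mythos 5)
Your proposal is correct and follows essentially the same route as the paper: reduce to pure states via the Terhal--Horodecki definition of Schmidt number, then observe that the Schmidt rank of a pure state is the rank of its coefficient matrix, which is unchanged under $C\mapsto L\,C\,M^{T}$ for invertible $L,M$. The paper reaches the same rank-invariance through singular value decompositions of $L$ and $M$, whereas you invoke rank preservation under invertible multiplication directly and make the reverse inequality explicit via the inverse filter; these are minor presentational differences, not a different argument.
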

\begin{proof}
Terhal and Horodecki~\cite{T4} defined
the Schmidt rank of a general density matrix.
A bipartite state $\rho$ has Schmidt rank $k$ if 
\begin{enumerate}
\item For any ensemble decomposition of $\rho$ as $\{p_j\geq
0,|\psi_j\rangle\}$ 
where $\rho=\sum_j p_j |\psi_j\rangle \langle \psi_j|$ at
least one of the vectors $|\psi_j\rangle$ has  Schmidt rank
$k$. 
\item There exists a decomposition of $\rho$ where all
vectors $\{|\psi_j\rangle\}$ in the decomposition have
a Schmidt rank at most $k$. 
\end{enumerate}
Therefore, we need only to show that the Schmidt
number of pure bipartite states 
$\ket{\psi}=\sum_j \lambda_j \ket{e_j}\otimes
\ket{f_j}$ is 
invariant under the operation
$\ket{\psi} \mapsto (L \otimes M) \ket{\psi}$.

The Schmidt
rank (SR) of $\ket{\psi}$ is the matrix rank of $\sum_j \lambda_j
\ket{f_j}\bra{e_j}$. Thus 
\begin{equation}
{\rm SR} \left( ( L \otimes M ) \ket{\psi}\right) =
{\rm Rank} \sum_j \lambda_j M \ket{f_j}\bra{e_j} L^\dag.
\end{equation}
Let $L=U_1D_1V_1$ and $M=U_2D_2V_2$ be the 
singular value decompositions of $L$ and $M$
respectively, where $U_1,V_1,U_2,V_2$
are unitary operators. 
Then 
\begin{eqnarray}
{\rm SR} \left( (L \otimes M )\ket{\psi}\right) &=&
{\rm Rank} \sum_j \lambda_j U_2D_2V_2 \ket{f_j}\bra{e_j} V_1^\dag
D_1^\dag U_1^\dag\nonumber \\
&=& {\rm Rank} \sum_j \lambda_j D_2\ket{f_j^{\prime}}\bra{e_j^{\prime}}
D_1^\dag
\end{eqnarray}
where $\ket{e_j^{\prime}}=V_1\ket{e_j}$ and
$\ket{f_j^{\prime}}=V_2\ket{f_j}$ are mutually orthogonal bases of
the first and second systems respectively. Since $L$
and $M$
are of full rank, the diagonal matrices $D_1$ and $D_2$ are
also of full rank, and the above assertion holds,
i.e. ${\rm SR}\ket{\psi}$ is
invariant under these operations. 
\end{proof}
The above proposition further shows that,
entanglement is not created or destroyed by the  above
operations.

Let us choose a standard basis $\{\vert j \rangle
\}$, $j=1\cdots n$ in an $n$-dimensional state space
$\mathcal{H}$. Any density operator $\rho \in 
\mathcal{B(H)}$ can then be  written as
$\rho=\displaystyle \sum_{i,j}\rho_{ij}\vert i \rangle \langle
j\vert$. Transpose operation is defined through
its action on $\rho$
\begin{equation}
\rho \stackrel{T}{\longrightarrow}\rho^T=\sum_{ij}\rho_{ji}
\vert i \rangle \langle j \vert
\label{transpose}
\end{equation}
A bipartite state $\rho$ is defined to be 
PPT if and only if $(\bm{1}\otimes T)\rho\geq0$ where $T$ is
the transpose operation defined on $\mathcal{H}_B$
as described in Equation~(\ref{transpose}).  
\begin{prop}
The  PPT or NPT character of a 
state is invariant
under an invertible local filtration operation.
\end{prop}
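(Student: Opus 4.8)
The plan is to show that the partial transpose of the filtered state $\rho^f = (L\otimes M)\rho(L\otimes M)^\dagger$ differs from a partial transpose of $\rho$ only by conjugation with a fixed invertible operator acting on the appropriate factors, so that positivity is preserved in both directions. The key observation is that the transpose $T$ on $\mathcal{H}_B$ does not commute with $M$ in general, but it does satisfy the identity $(M X M^\dagger)^T = \bar{M}\, X^T\, \bar{M}^\dagger$ for any operator $X$, where $\bar{M}$ denotes the complex conjugate of $M$ in the standard basis (equivalently $\bar{M} = (M^\dagger)^T = (M^T)^\dagger$). This is the algebraic heart of the argument and the step I expect to require the most care.

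First I would write $(\bm{1}\otimes T)\rho^f$ explicitly. On the $A$ side the filter acts trivially with respect to $T$, so conjugation by $L$ simply passes through: the $A$-part contributes $L(\cdot)L^\dagger$ unchanged. On the $B$ side we apply the conjugation identity above to pull the transpose through $M$, converting $M(\cdot)M^\dagger$ into $\bar{M}(\cdot)^T\bar{M}^\dagger$. Carrying this out on the full bipartite operator yields
\begin{equation}
(\bm{1}\otimes T)\rho^f = (L\otimes \bar{M})\,\big[(\bm{1}\otimes T)\rho\big]\,(L\otimes \bar{M})^\dagger .
\label{eq:ptfilter}
\end{equation}
This is the crucial structural equation: the partial transpose of the filtered state is a \emph{local filtration} (by the invertible operator $L\otimes\bar M$) applied to the partial transpose of the original state.

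With Equation~(\ref{eq:ptfilter}) in hand the conclusion is immediate from the congruence-invariance of positivity. Since $L$ is full rank and $\bar M$ is full rank whenever $M$ is (complex conjugation in a fixed basis preserves invertibility), the operator $N := L\otimes\bar M$ is invertible. If $\rho$ is PPT, i.e. $(\bm{1}\otimes T)\rho\geq 0$, then $N\big[(\bm{1}\otimes T)\rho\big]N^\dagger\geq 0$ because $X\geq 0 \implies NXN^\dagger\geq 0$; hence $\rho^f$ is PPT. Conversely, because $N$ is invertible we may apply the same reasoning to $N^{-1}$, so that $\rho^f$ being PPT forces $\rho$ to be PPT. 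The NPT case follows by taking complements: a state is NPT exactly when it is not PPT, and the PPT equivalence just established transfers automatically.

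The main obstacle is purely the bookkeeping in deriving Equation~(\ref{eq:ptfilter})—verifying that the transpose on the $B$ factor turns $M$ into $\bar M$ with the adjoint landing in the right place, and confirming that the $A$ factor is untouched. Once that identity is secured, everything else is a one-line consequence of the fact that congruence by an invertible operator preserves positive semidefiniteness and is reversible; no spectral analysis or eigenvalue estimates are needed. It is worth noting that this argument is strictly stronger than merely showing the sign of eigenvalues is preserved: it shows the partial transpose transforms covariantly, which is why invertibility (rather than, say, unitarity) of the filter is exactly the hypothesis required.
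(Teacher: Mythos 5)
Your proposal is correct and follows essentially the same route as the paper: both derive the key identity $(\bm{1}\otimes T)\rho^f = (L\otimes\bar{M})\bigl[(\bm{1}\otimes T)\rho\bigr](L\otimes\bar{M})^\dagger$ (the paper via an explicit block decomposition $\rho=\sum_{i,j}|i\rangle\langle j|\otimes\rho_{ij}$, you via the conjugation identity $(MXM^\dagger)^T=\bar{M}X^T\bar{M}^\dagger$) and then conclude from congruence-invariance of positivity. Your explicit use of invertibility to run the argument in both directions is a slightly more careful rendering of the paper's closing remark that ``a similar argument holds for NPT states.''
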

\begin{proof}
Now consider  $\{|j\rangle\}$, $j=1\cdots n_1$ to be 
be the standard basis in
$\mathcal{H}_A$. We can write $\rho =
\sum_{i,j} |i\rangle\langle j|\otimes \rho_{ij}$,
where $\rho_{ij}\in\mathcal{B(\mathcal{H}_B)}$. We
then have \begin{equation} \rho^f=(L\otimes M)\rho
(L\otimes M)^\dag = \sum_{i,j} L|i\rangle\langle
j|L^\dag \otimes M\rho_{ij}M^\dag \end{equation}
After the application of the transpose operation
on the second system we have 
\begin{eqnarray}
(\bm{1}\otimes T) \rho^f &=& \sum_{i,j}
L|i\rangle\langle j|L^\dag \otimes
(M\rho_{ij}M^\dag)^T\nonumber \nonumber \\ &=&
\sum_{i,j} L|i\rangle\langle j|L^\dag \otimes
\bar{M}\rho_{i,j}^T\bar{M}^\dag \nonumber \\ &=&
(L\otimes\bar{M}) \left(\sum_{i,j} |i\rangle\langle
j|\otimes \rho_{ij}^T\right) (L\otimes\bar{M})^\dag
\nonumber \\ &=&(L\otimes\bar{M}) ((\bm{1}\otimes
T)\rho) (L\otimes\bar{M})^\dag \geq 0.
\end{eqnarray} 
where $\bar{M}$ is the complex
conjugate of $M$.  This proves that PPT states
remain PPT under a local filter defined in
Equation~(\ref{filter}). A similar argument holds
for NPT states.
\end{proof}
If the original state is entangled, the nature of
its entanglement, (NPT or PPT) does not change under
the filtering operation. Therefore for a given PPT
entangled state, the filtered state is another PPT
entangled state. It may turn out that even if the
entanglement of the original state is not detectable by a
given entanglement witness, the filtered state
reveals its entanglement by the same witness. 
This thus allows us to generate new PPT entangled
states from the old ones.
\subsection{Entanglement witnesses and local filters}
\label{filtration-ent}
Two maps due to Choi~\cite{choi2} 	
(which are P maps that are not
CP) are defined through their action on a
$3\times3$ matrix as:
\begin{eqnarray}
\phi:[[a_{ij}]] &\mapsto& \frac{1}{2}
\begin{pmatrix}
a_{11}+a_{33} & -a_{12} & -a_{13} \\
-a_{21} & a_{22}+ a_{11} & -a_{23} \\
-a_{31} & -a_{32} & a_{33}+ a_{22}
\end{pmatrix}  \label{eqn:1}\\
\psi:[[a_{ij}]] &\mapsto& \frac{1}{2}
\begin{pmatrix}
a_{11}+a_{22} & -a_{12} & -a_{13} \\
-a_{21} & a_{22}+ a_{33} & -a_{23} \\
-a_{31} & -a_{32} & a_{33}+ a_{11}
\end{pmatrix}. \label{eqn:2}
\end{eqnarray}
These maps  provide us with  important
entanglement witnesses for PPT entangled states. 
Extensions of these maps have been used to unearth
new PPT entangled states and to detect
entanglement of PPT states formed out of UPB~\cite{PhysRevA.84.032328,
PhysRevA.87.012318}.  We recast these results in
terms of local filters and show that the PPT
entangled states described in~\cite{PhysRevA.84.032328}
can be locally filtered into a state that is
detectable by the Choi map. Similarly we  show
that the PPT entangled states in the orthogonal
complement of UPB can be locally filtered into
states detectable by the Choi map.

Consider a density operator for the system
$\mathcal{B}(\mathbb{C}^3 \otimes \mathbb{C}^3)$,
defined by two parameters $t$ and $x$.
\begin{equation}\label{choiexample}
\rho(x,t)= K
\left(
\begin{array}{ccc|ccc|ccc}
 1+t & 0 & 0 & 0 & x & 0 & 0 & 0 & x \\
 0 & t & 0 & x & 0 & 0 & 0 & 0 & 0 \\
 0 & 0 & \frac{1}{t} & 0 & 0 & 0 & x & 0 & 0 \\\hline
 0 & x & 0 & \frac{1}{t} & 0 & 0 & 0 & 0 & 0 \\
 x & 0 & 0 & 0 & 1+t & 0 & 0 & 0 & x \\
 0 & 0 & 0 & 0 & 0 & t & 0 & x & 0 \\\hline
 0 & 0 & x & 0 & 0 & 0 & 1 & 0 & 0 \\
 0 & 0 & 0 & 0 & 0 & x & 0 & \frac{1}{t} & 0 \\
 x & 0 & 0 & 0 & x & 0 & 0 & 0 & 1
\end{array}
\right).
\end{equation}
where $K=\frac{1}{4+\frac{3}{t}+4 t}$ is the
normalization constant and this $\rho$ is a unit
trace density operator for $t>0$ and $0\leq
x\leq1$. The state is entangled for a range of
values of $x$ and $t$; however it is not always
detected by the Choi maps given in
equations~(\ref{eqn:1}) and~(\ref{eqn:2}).  For
instance, set $t =\frac{1}{20}$, then for $0.6044
< x < 0.6554$, the  state is not detectable by the
Choi map.
Consider a local filter 
\begin{equation}
L_3\otimes M_3 = \left( \begin{array}{ccc}
1&0&0\\0&\frac{5}{8}&0\\ 0&0&
\frac{5}{8}  \end{array}
\right)
\otimes I_3
\end{equation}
The filtered state after the application of this 
filter is obtained as 
\begin{equation}
\rho^{f}(x,t)=(L_3\otimes M_3) \rho(x,t) (L_3\otimes
M_3)^{\dagger}
\label{filter1}
\end{equation}
We now apply the Choi map given in
equation~(\ref{eqn:1}) on the first system via
$\phi\otimes I_3 $ to the filtered as well as non-filtered 
density operator to obtain
\begin{eqnarray}
\rho(x,t) &\stackrel{\phi\otimes
I_3}{\longrightarrow}& \rho_{\rm Choi}(x,t)
\\
\rho^{f}(x,t) &\stackrel{\phi\otimes
I_3}{\longrightarrow} &
 \rho^{f}_{\rm Choi}(x,t)
\end{eqnarray}
For the operator $\rho^{f}_{\rm Choi}(x,t)$ for $t
=\frac{1}{20}$ and for $0.6044 < x < 0.6554$, the
minimum eigenvalue turns out to be negative,
indicating that the state is entangled.  On the
other hand, the minimum
eigen value of the operator $\rho_{\rm
Choi}(x,t)$ which is obtained by the
application of Choi map without filtering, is
positive. This shows that the local filter
defined in equation~(\ref{filter1}) has converted
the state $\rho(x,t)$ whose entanglement was not
detectable via the Choi map into a state
$\rho^{f}(x,t)$ whose entanglement is detectable
via the Choi map.

An important example of bound entangled states is
provided by the well known UPB construction known
as `TILES'~\cite{B2} for a $3\otimes 3$ system
\begin{eqnarray}\label{eq:tiles}
|\psi_0\rangle
&=&\frac{1}{\sqrt{2}}|0\rangle\left(|0\rangle-|1\rangle\right),\quad
|\psi_2\rangle =
\frac{1}{\sqrt{2}}|2\rangle\left(|1\rangle-|2\rangle\right),
\nonumber\\ |\psi_1\rangle &=&
\frac{1}{\sqrt{2}}\left(|0\rangle-|1\rangle\right)|2\rangle,\quad
|\psi_3\rangle =
\frac{1}{\sqrt{2}}\left(|1\rangle-|2\rangle\right)|0\rangle,\nonumber\\
|\psi_4\rangle &=&
\frac{1}{3}\left(|0\rangle+|1\rangle+|2\rangle\right)
\left(|0\rangle+|1\rangle+|2\rangle\right)
\end{eqnarray} The mixed state
\begin{equation}\label{eq:tilesstate}
\rho^{\rm upb}=\frac{1}{4}\left(I_9-\sum_{i=0}^4|
\psi_i\rangle\langle\psi_i|\right).
\end{equation} 
provides  an example of a PPT  entangled
state~\cite{B2}.  Choi maps, applied directly, can
not detect the entanglement of such states.
Consider the local filter
\begin{equation}
L_3^{\prime}\otimes M_3^{\prime} = 
I_3 \otimes  
\begin{pmatrix}
\frac{1}{\sqrt{2}} & 0 & \frac{1}{\sqrt{2}} \\
0 & 1 & 0 \\
-\frac{1}{\sqrt{2}} & 0 & \frac{1}{\sqrt{2}}
\end{pmatrix}, 
\end{equation}
Applying this filter gives a new filtered state
given by 
\begin{equation}
\rho^{f\,\rm upb} = (L_3^{\prime} \otimes M_3^{\prime}) \rho^{\rm upb}
(L_3^{\prime} \otimes M_3^{\prime})^{\dagger}
\end{equation}
We now apply the second Choi map given in
equation~(\ref{eqn:2}) on the second system via
$I_3\otimes\psi$ to the filtered as well as non-filtered density 
operator to obtain
\begin{eqnarray}
\rho^{\rm upb} &\stackrel{
I_3\otimes \psi}{\longrightarrow}& \rho^{\rm upb}_{\rm Choi}
\\
\rho^{f\,\rm upb} &\stackrel{
I_3\otimes \psi}{\longrightarrow} &
 \rho^{f~\rm upb}_{\rm Choi}.
\end{eqnarray}
The operator $\rho^{f~\rm upb}_{\rm Choi}$ has a
negative eigen value which reveals the
entanglement of $\rho^{\rm upb}$ while
$\rho^{\rm upb}_{\rm Choi}$ does not have a negative
eigen value. This shows that the entanglement of 
the state $\rho^{\rm upb}$ is not directly revealed
by the Choi map, however, it can be filtered into
a state that is detected by the Choi map. 
This is directly related to the construction given
in terms of the automorphisms
in~\cite{PhysRevA.87.012318} and is much simpler
than the  construction given in~\cite{T1}.
\section{Implementation of local filters}
\label{implementation}
\subsection{General scheme}
\label{implementation-general}
We now turn to the question of the physical
interpretation and implementation of the local quantum
filtration process introduced in the previous
section. A filter is defined through its action
given in equation~(\ref{filter}) and comprises of
invertible operators $L$ and $M$ where $L$ acts
locally on $\mathcal{H}_A$ (the Hilbert space of
Alice) and $M$ acts locally on
on $\mathcal{H}_B$ (the Hilbert space of Bob). We 
choose the standard bases
in $\mathcal{H}_A$ and $\mathcal{H}_B$.  Each of
these operators has a singular valued
decomposition given by $L=U_1 D_1 V_1$ and $M=U_2
D_2 V_2$. Here $U_1,U_2,V_1,V_2$ are unitary
operators and $D_1, D_2$ are diagonal with real
positive definite diagonal entries.  The  unitary
operators correspond to Hamiltonian evolutions and
can hence be physically realized in a
straightforward way.  We therefore focus here on
the implementation of diagonal matrices $D_1$ and
$D_2$.

Consider the implementation of $D_1$ on
$\mathcal{H}_A$. The diagonal matrix $D_1={\rm
Diag} [d_1,d_2,\cdots d_n]$ has diagonal entries
$d_j$ such that $0<d_j\leq 1, \, (j=1\cdots n)$.  We
now show that such a  $D_1$ can be implemented by
first extending the Hilbert space by adding one
qubit as ancilla and then measuring an appropriate
projection operator $P$.  To achieve this, we
first consider a set of $n$ orthogonal but
un-normalized vectors of the form
$\ket{u_j}=\sqrt{d_j}\ket{j}$ in the $n$
dimensional system  Hilbert space. We extend each
of these vectors into a $2n$ dimensional
Hilbert space to form a new set of
vectors $\{ \ket{\xi_j} \}$ given by
$\ket{\xi_j}=\sqrt{d_j}\ket{j}+
\sqrt{1-d_j}\ket{j+n}$.
In addition to being mutually orthogonal,
these vectors are also normalized. Thus we have
constructed an orthonormal set of $n$ vectors
in a $2n$ dimensional Hilbert space formed from
the $n$-dimensional system and a two-dimensional
ancilla.
Corresponding to each of these vectors, we 
can construct a projection operator $P_j$
given by
\begin{eqnarray}
P_j &=& \ket{\xi_j}\bra{\xi_j}\nonumber\\
    &=& \left(
\begin{array}{c|c}
\eta_j& \delta_j \\\hline
\delta_j & \eta_j^{\prime} 
\end{array} \right)_{2n \times 2n}
\label{topleft}
\end{eqnarray}
where the $n\times n$ matrices are given by:
\begin{eqnarray}
 \eta_j &=& d_j\ket{j}\bra{j}, 
\nonumber\\
 \eta_j^{\prime} &=& \left(1-d_j\right)\ket{j}\bra{j},
\nonumber\\
 \delta_j &=& \sqrt{d_j\left(1-d_j\right)}\ket{j}\bra{j} 
\end{eqnarray}
The projection operator obtained by adding these
mutually orthogonal projectors  is given
by
\begin{equation}
P = 
\sum_{j=1}^n P_j 
= \left(
\begin{array}{c|c}
D_1 & \Delta \\\hline
\Delta & D_1^{\prime} 
\end{array} \right)_{2n \times 2n}
\label{theprojector}
\end{equation}
where $D_1=\eta_1+\cdots+\eta_n$, is the
original operator that we wanted to implement,
$D_1^{\prime}=\eta_1^{\prime}+\cdots+\eta_n^{\prime}$ is a
complementary operator obtained from $D_1$ and
$\Delta=\delta_1+\cdots+\delta_n$ represents the
cross terms.

Now consider the system to be in an arbitrary
state $\rho_A$ and  the one-qubit  ancilla to be
in the state 
$\ket{0}\bra{0}$. Consider a measurement of $P$ on
this composite system. If the outcome of the
measurement is positive, we retain the state. The
state after such a selection is given by  the
action of the projection operator $P$ on the
composite state:
\begin{eqnarray}
 && P\left(\ket{0}\bra{0} \otimes \rho_A\right)
 P \nonumber\\ &=& \left(
\begin{array}{c|c}
D_1\rho_A {D_1} & D_1\rho_A {\Delta} \\\hline
{\Delta} \rho_A D_1 &  \Delta \rho_A \Delta 
\end{array} \right)
\label{map}
\end{eqnarray}
\begin{figure}[h]
\begin{center}
\includegraphics[]{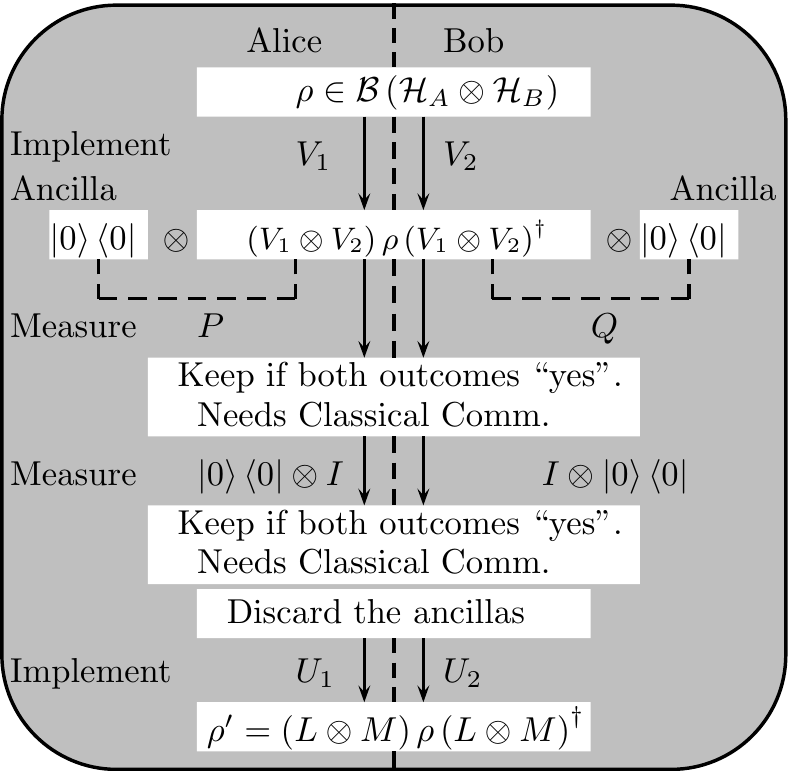}
\end{center}
\caption{Schematic diagram for performing the
local filtration via measurements.}
\label{filterpic}
\end{figure}
We further carry out  a projective measurement 
of the projector operator $\ket{0}\bra{0}$ on the ancilla qubit 
and retain the state only if the outcome  is positive.
We now discard the ancilla which is anyway decoupled from the
system. This completes the implementation of the map $D_1$ on the system
density operator $\rho_A$. Thus to realize the operator
$D_1$ in the space of the system,  we need to carry out two
projective measurements: first measure the projector $P$ in
the combined space of the system and ancilla and retain  the
state if the answer is positive and then make another
measurement of the projector $\vert 0 \rangle \langle 0
\vert$ only in the ancilla space and retain the state if the
outcome is positive.  As mentioned earlier, to implement $L$
we need to implement the unitaries $U_1$ and $V_1$ which can
be accomplished via the standard Hamiltonian evolution.  For
the implementation of the map $M$ on the second system in
the Hilbert space $\cal {H}_B$ we follow an analogous
procedure. The non-unitary part represented by $D_2$ is
implemented via two projective measurements after adding an
ancilla qubit and the unitaries $U_2$ and $V_2$ involved in
the singular decomposition of $M$ are implemented via a
Hamiltonian evolution.

In an actual implementation we require an ensemble
with several copies of the shared state $\rho$
between Alice and Bob. The protocol works as
follows:
\begin{itemize}
\item {\bf Step 1:} Alice implements the unitary $V_1$ on her
part of the state and Bob implements the unitary
$V_2$ on his part of the state. 
\item {\bf Step 2:} Alice and
Bob each attach a one-qubit ancilla prepared in a
state $\vert 0 \rangle$, to their part of the
state. Alice measures projector $P$ corresponding
to $D_1$ followed by a measurement of $ \vert 0
\rangle \langle 0 \vert $ on her ancilla qubit. Bob
measures the projector $Q$ corresponding to $D_2$
followed by a measurement of $ \vert 0 \rangle
\langle 0 \vert $ on his ancilla qubit. They retain
the state if all the four measurements give
positive results.  Otherwise they discard the
state.  
\item {\bf Step 3:} If they retain the state in the
previous step,
each one of them discards the ancilla qubits and
then Alice implements $U_1$ on her part of the
state and Bob implements $U_2$ on his part of the
state. They repeat this process on all the copies
of $\rho$ to obtain the new filtered ensemble.  
\end{itemize}
This protocol obviously requires classical
communication between Alice and Bob because they
need to know the outcome of the measurements that the
other performs. The situation is schematically
depicted in Figure~\ref{filterpic}.
\subsection{Filtration of two-qubit systems}
\label{implementation-two}
An interesting example of quantum filtration was
introduced by Gisin~\cite{MR1371104} for an
entangled  mixed state of two spin-$\frac{1}{2}$
particles not violating the Bell-CHSH inequality.
In this scheme by using a polarized beam splitter
one can convert such an input state to an output
state which remains entangled but, this time, its
entanglement can be detected by a Bell inequality
violation.  We recast this filtration scheme and
connect it with our results.

Let us suppose that Alice and Bob share a
$2\otimes 2$ system $\rho \in
\mathcal{B}\left(\mathcal{H}_A\otimes
\mathcal{H}_B\right)$ between themselves.
Interpreting the Gisin filter in our formalism
reveals that in that case $\rho \mapsto (L_2 \otimes
M_2) \rho (L_2 \otimes M_2)^\dagger$, where the
operators $L_2$ and $M_2$ are given  by 
\begin{equation}
L_2=\begin{pmatrix}
  \kappa && 0\\
   0 && 1
\end{pmatrix} \quad {\rm and} \quad
 M_2=\begin{pmatrix}
    1 && 0\\
    0 && \kappa
   \end{pmatrix}  
\end{equation}
In the notation of
reference~\cite{MR1371104},
$\kappa=\sqrt{\frac{\beta}{\alpha}}$, $\alpha$ and
$\beta$ are two real numbers such that
$\alpha>\beta >0$ and $\alpha^2 +\beta^2=1$ so
that $0< \kappa < 1$.  Let us consider the
implementation of the non-unitary operator $L_2$.
Since the operator $L_2$ is acting locally, we need
to look for the map 
\begin{equation*}
\rho_A \mapsto L_2\rho_A L_2.
\end{equation*}
Since $L_2$ is a diagonal matrix ($L_2^{\dagger}=L_2$), we do not need to
undertake a singular value decomposition.  We
directly  introduce two mutually orthogonal but
un-normalized vectors
\begin{equation}
 \ket{u_1}=\begin{pmatrix}
              \sqrt{\kappa}\\ 0
             \end{pmatrix}, \quad
 \ket{u_2}=\begin{pmatrix}
              0 \\ 1
             \end{pmatrix}.
\end{equation}
As discussed in 
Section~\ref{implementation-general}, by adding
a one-qubit ancilla these  two-dimensional vectors
can be extended into four-dimensional orthonormal
vectors
\begin{equation}
 \ket{\xi_1} = \begin{pmatrix}
              \sqrt{\kappa}\\ 0 \\
              \sqrt{1-\kappa} \\ 0
             \end{pmatrix}, \quad
 \ket{\xi_2} = \begin{pmatrix}
              0 \\ 1 \\ 0 \\ 0
             \end{pmatrix}		
\end{equation}
Constructing the corresponding mutually orthogonal
projectors results in 
\begin{eqnarray}
 P_1 &=& \begin{pmatrix}
          \kappa & 0 &
          \sqrt{\kappa\left(1-\kappa\right)} & 0 \\
          0 & 0 & 0 & 0\\
         \sqrt{\kappa\left(1-\kappa\right)} &
          0 & 1-\kappa & 0\\
          0 & 0 & 0 & 0
         \end{pmatrix}\nonumber\\
  P_2 &=& \begin{pmatrix}
           0 & 0 & 0 & 0\\
           0 & 1 & 0 & 0\\
           0 & 0 & 0 & 0\\
           0 & 0 & 0 & 0
          \end{pmatrix}       
\end{eqnarray}
The required projector $P = P_1+P_2$ is then
given by
\begin{eqnarray}
P = \left(
\begin{array}{c|c}
L_2 & \Delta_2\\\hline 
\Delta_2& L_2^{\prime} \end{array} \right)
\end{eqnarray}
Here $L_2^{\prime}$ and 
$\Delta_2$ are defined following the definitions
below Equation~(\ref{theprojector}).

Consider now the action of this projector on a
state where the system is in an arbitrary state
$\rho_A$ and the ancilla qubit is in the state 
$\rho_a=\ket{0}\bra{0}$.
\begin{equation}
P (\ket{0}\bra{0}\otimes \rho_A)P=
\left(  
\begin{array}{c|c}
L_2\rho_A L_2 & L_2 \rho_A {\Delta_2} \\\hline
{\Delta_2}\rho_A L_2 &  \Delta_2 \rho_A \Delta_2
\end{array} \right)\label{ar}
\end{equation}
This is the result of measurement of $P$ on the
composite system for the case when the outcome is
positive. To extract the top left block of the
above matrix, we perform another projective
measurement in the ancilla space and retain it if
the outcome of measurement of $\vert 0 \rangle
\langle 0 \vert$ is positive.  Bob does a similar
exercise to implement $M_2$ in his laboratory and both
Alice and Bob retain the state only when the
outcomes of measurement of all four projectors are
positive. This completes the protocol. 
\subsection{The case of $3\otimes 3$ systems}
\label{implementation-three}
In Section~\ref{filtration-ent} we demonstrated
the role of local filters in strengthening the
entanglement detection capabilities of
entanglement witnesses for $3\otimes 3$ systems.
Here we delineate the implementation of such a
filtration process for these systems.  We begin by
discussing the filtration on a single three-level
system $\rho_A \in
\mathcal{B}\left(\mathcal{H}_A\right)$.  We
consider the following transformation
$\rho_A\mapsto L_3\rho_A L_3^\dag$ where $L=U_1D_1V_1$
is the singular value decomposition of $L$ and
\begin{equation}
 D_1=\begin{pmatrix}
d_1 &&\\
& d_2&\\
&& d_3
\end{pmatrix}
\end{equation}
with $0< d_1,d_2,d_3 \leq 1 $. 

We now introduce three
mutually orthogonal and un-normalized vectors,
in a three-dimensional Hilbert space
\begin{equation}
\ket{u_1} = \begin{pmatrix}
                \sqrt{d_1} \\ 0 \\ 0 
               \end{pmatrix},
 \ket{u_2} = \begin{pmatrix}
                0\\ \sqrt{d_2} \\ 0 
               \end{pmatrix},
 \ket{u_3} = \begin{pmatrix}
                0 \\ 0 \\ \sqrt{d_3} 
               \end{pmatrix}
\end{equation}
Now introducing a qubit as an ancilla  and extending the above
vectors in the composite Hilbert space of 6 dimensions, we obtain
\begin{equation}
 \ket{\xi_1} = \begin{pmatrix}
                \sqrt{d_1} \\ 0 \\ 0 \\
                \!\!
\sqrt{1-d_1} \\ 0 \\ 0
               \end{pmatrix},\,
 \ket{\xi_2} = \begin{pmatrix}
                0\\ \sqrt{d_2} \\ 0 \\
                0 \\\!\!\sqrt{1-d_2} \\ 0
               \end{pmatrix}, \,
 \ket{\xi_3} = \begin{pmatrix}
                0 \\ 0 \\ \sqrt{d_3} \\
                0 \\ 0 \\ \!\sqrt{1-d_3}
               \end{pmatrix}            
\end{equation}
which are orthonormal.

The projection operators corresponding to these
vectors are given by  
$P_j=\ket{\xi_j}\bra{\xi_j}$  and can be written
explicitly as 
\begin{equation*}
 P_1 = \begin{pmatrix}
          d_1 & 0 & 0 & \sqrt{d_1 \left(1-d_1\right)} & 0 & 0\\
          0 & 0 & 0 & 0 & 0 & 0\\
          0 & 0 & 0 & 0 & 0 & 0\\
          \sqrt{d_1 \left(1-d_1\right)} & 0 & 0 & 1-d_1 & 0 & 0\\
          0 & 0 & 0 & 0 & 0 & 0\\
          0 & 0 & 0 & 0 & 0 & 0
         \end{pmatrix}
\end{equation*}
\begin{equation*}
 P_2 = \begin{pmatrix}
          0 & 0 & 0 & 0 & 0 & 0\\
          0 & d_2 & 0 & 0 & \sqrt{d_2\left(1-d_2\right)} & 0\\
          0 & 0 & 0 & 0 & 0 & 0\\
          0 & 0 & 0 & 0 & 0 & 0\\
          0 & \sqrt{d_2\left(1-d_2\right)} & 0 & 0 & 1-d_2 & 0\\
          0 & 0 & 0 & 0 & 0 & 0
         \end{pmatrix}
\end{equation*}
\begin{equation}
 P_3 = \begin{pmatrix}
          0 & 0 & 0 & 0 & 0 & 0\\
          0 & 0 & 0 & 0 & 0 & 0\\
          0 & 0 & d_3 & 0 & 0 & \sqrt{d_3\left(1-d_3\right)}\\
          0 & 0 & 0 & 0 & 0 & 0\\
          0 & 0 & 0 & 0 & 0 & 0\\
          0 & 0 & \sqrt{d_3\left(1-d_3\right)} & 0 & 0 & 1-d_3
         \end{pmatrix}        
\end{equation}
The projector $P$ corresponding to the operator $L_3$ is thus given by
\begin{equation}
P = P_1+P_2+P_3
  = \left(
\begin{array}{c|c}
L_3 & \Delta_3\\\hline 
\Delta_3& L_3^{\prime} \end{array} \right)
\end{equation}
For any operator
$\rho_A\in\mathcal{B}(\mathbb{C}^3)$ we use a
one-qubit ancillary  system in the state
$\ket{0}\bra{0}$ and  measure the projector $P$.
This leads us to an equation which is the same as
the Equation~(\ref{ar}) with $L_2$ and $\Delta_2$
replaced by $L_3$ and $\Delta_3$.  Similarly we
can do the analysis from Bob's point of view and
arrive at the projector $Q$ corresponding to $D_2$
($M_3=U_2D_2V_2$). Using this projector and a one
qubit ancilla he sets up his measurements.  Then
they both follow the protocol steps 1-3 given in
the last part of
Section~\ref{implementation-general} to complete
the filtration process and obtain the new joint
density operator.
\section{Conclusions}
\label{conc}
In this work, we discussed the role of local
filters in transforming one PPT entangled state to
another PPT entangled state. It may turn out that the
entanglement of the new state is detectable by a P
map which is not CP while the entanglement of the
original state is not detectable by the map.  It
is in this sense that local filters can enhance
the power of an entanglement witness in detecting
entanglement. We give two concrete examples where
this actually occurs. In  the first example, a new
class of bound entangled states becomes detectable
by the Choi map and in the second example PPT
entangled states in the orthogonal complement of
UPB become detectable by the Choi map.

We then undertook the analysis of these
filtration schemes as explicit local projective
measurements coupled with local unitaries. It
turns out that we need to add a one-qubit ancilla 
for both the parties involved in order to
implement the non-unitary part of these filters as 
local measurements. We have constructed explicit 
projection operators corresponding to the filters
that we have used.

A point worth emphasizing is that these local
filters do not change the NPT or PPT status of a
state. Gisin exploited this fact to convert NPT
states of two qubits which do not violate Bell's
inequalities into the ones which violate Bell's
inequalities. We have used these filters to
convert one PPT entangled state into another PPT
entangled state such that the PPT entanglement is
detectable by a given entanglement witness.
\begin{acknowledgements}
DD acknowledges CSIR India and RS
acknowledges NBHM India for financial support.
\end{acknowledgements}
\def\Dbar{\leavevmode\lower.6ex\hbox to 0pt{\hskip-.23ex \accent"16\hss}D}
  \def\Dbar{\leavevmode\lower.6ex\hbox to 0pt{\hskip-.23ex \accent"16\hss}D}
  \def\polhk#1{\setbox0=\hbox{#1}{\ooalign{\hidewidth
  \lower1.5ex\hbox{`}\hidewidth\crcr\unhbox0}}} \def\cprime{$'$}
\end{document}